\theoremstyle{remark}
\newtheorem{theorem}{Theorem}
\newtheorem{proposition}{Proposition}
\newtheorem{lemma}{Lemma}
\newtheorem{definition}{Definition}
\newtheorem{example}{Example}
{\itshape}{\rmfamily}
\newtheorem{remark}{Remark}
\def\Bcdf{F_B}
\def\Ecdf{F_E}
\title{\LARGE \bf
The Division of Assets in Multiagent Systems: \\ A Case Study in Team Blotto Games   
%
}
\author{Keith Paarporn, Rahul Chandan, Mahnoosh Alizadeh and Jason R. Marden 
\thanks{K. Paarporn ({\tt\small kpaarporn@ucsb.edu}), R. Chandan ({\tt\small rchandan@ucsb.edu}), M. Alizadeh ({\tt\small alizadeh@ucsb.edu}), and J. R. Marden ({\tt\small jrmarden@ece.ucsb.edu}) are with the Department of Electrical and Computer Engineering at the University of California, Santa Barbara, CA.} \thanks{This work is supported by UCOP Grant LFR-18-548175, ONR grant \#N00014-20-1-2359, and AFOSR grant \#FA9550-20-1-0054}
}
\begin{document}

\maketitle
\thispagestyle{empty}
\pagestyle{empty}

\begin{abstract}
    Multi-agent systems are designed to concurrently accomplish a diverse set of tasks at unprecedented scale. Here, the central problems faced by a system operator are to decide (i) how to divide available resources amongst the agents assigned to tasks and (ii) how to coordinate the behavior of the agents to optimize the efficiency of the resulting collective behavior. The focus of this paper is on problem (i), where we seek to characterize the impact of the division of resources on the best-case efficiency of the resulting collective behavior.  Specifically, we focus on a team Colonel Blotto game where there are two sub-colonels competing against a common adversary in a two battlefield environment. Here, each sub-colonel is assigned a given resource budget and is required to allocate these resources independent of the other sub-colonel. However, their success is dependent on the allocation strategy of both sub-colonels.  The central focus of this manuscript is on how to divide a common pool of resources among the two sub-colonels to optimize the resulting best-case efficiency guarantees.  Intuitively, one would imagine that the more balanced the division of resources, the worse the performance, as such divisions restrict the sub-colonels' ability to employ joint randomized strategies that tend to be necessary for optimizing performance guarantees.  However, the main result of this paper demonstrates that this intuition is actually incorrect. A more balanced division of resources can offer better performance guarantees than a more centralized division.  Hence, this paper demonstrates that the resource division problem is highly non-trivial in such enmeshed environments and worthy of significant future research efforts. 
\end{abstract}

\section{Introduction}\label{sec:intro}
Multi-agent systems rely on the collective behaviors of independent decision-makers (agents), as they are often too large and complex to allow a centralized authority to effectively operate. Such systems are designed to concurrently accomplish a diverse set of tasks, e.g. multiple organizations contributing to the operation of a supply chain, or a coalition of independent military units sent to secure a number of locations. The agents are often heterogeneous, each possessing distinct roles and/or varying levels of capability. A central problem for a system operator is to determine how to divide available resources among the agents such that they can most effectively accomplish their given tasks. In an ideal setting, each of the tasks can be completed in isolation by a specialized agent, and the optimal division of resources is often straightforward. However, when the completion of a task relies on the behaviors of multiple heterogeneous agents, e.g., the agents' decisions have a degree of interdependence on the completion of the task, the question of how to divide resources may not be as straightforward.


In this paper, we consider such interdependencies in the setting of a Colonel Blotto game, where a team of two sub-colonels compete against a common enemy over the same two battlefields. Each sub-colonel is in control of a portion of the total available resources, and must independently decide how to allocate them across the two battlefields. The sum of the sub-colonels' allocations on each battlefield competes against the enemy's allocation. The measure of system performance we consider here is the optimal security value, which is the highest payoff the team can ensure regardless of the enemy's behavior, through the sub-colonels' independent selection of allocation strategies. 

It is important to understand the limitations of such a distributed decision-making structure in comparison to a completely centralized structure, i.e. where one of the sub-colonels is in control of all the resources. Sub-colonels on a team make decisions independently of each other and, hence, any form of randomization the team can produce as a whole is a result of the players' independent randomizations. This limits the forms of joint randomness a team can produce. A completely centralized structure places no such restrictions on the forms of randomization that can be produced. In this light, performance guarantees for a system with a distributed structure can be no better than a completely centralized structure. Zero-sum games with such team structures have recently been studied \cite{Schulman_2019}. 


The primary focus of the paper is on answering the following question: ``How should $B$ available resources be divided among the two sub-colonels by endowing each with resources $B_1$ and $B_2$ (such that $B_1 + B_2 = B$), in order to maximize their achievable performance guarantees?'' In the extreme case, the choice $B_1 = 0$ and $B_2 = B$ reduces to a completely centralized command structure, where sub-colonel 2 is in control of all $B$ resources. Meanwhile, the case where $B_1, B_2 > 0$ represents a distributed command structure wherein each team sub-colonel has independent control of a portion of the total $B$ resources (see Figure \ref{fig:diagram}). Intuition suggests one should make the system as `centralized' as possible -- as the division $B_1$ increases to $B/2$, we say the system becomes `less centralized', as sub-colonel 2's control of the larger portion approaches sub-colonel 1's portion. Indeed, if the extreme case $B_1 = 0$ is an option, this is a trivial decision to make. However, the centralized option may not always be available to a system operator due to constraints or limitations, e.g. sub-colonel 1 must be in control of a positive portion of the available resources. In the presence of such constraints, is the most centralized option (making one sub-colonel as strong as possible within the constraints) still the best choice to make? 

Our main contribution in this paper, contrary to intuition, asserts that the most centralized option is not the best division of resources in general. In particular, we show that the team's achievable performance guarantees are not, in general, monotonic for $B_1\in[0,B/2]$. Furthermore, we identify non-centralized divisions of the resources in which the team can recover the same performance as the completely centralized case.  Our results suggest that the problem of optimally dividing resources among agents that comprise autonomous systems is highly non-trivial, especially when there are interdependencies between the agents' actions. Hence, an understanding of the particular system at hand is required.

\begin{figure}[t]
    \centering
    \includegraphics[scale = .16]{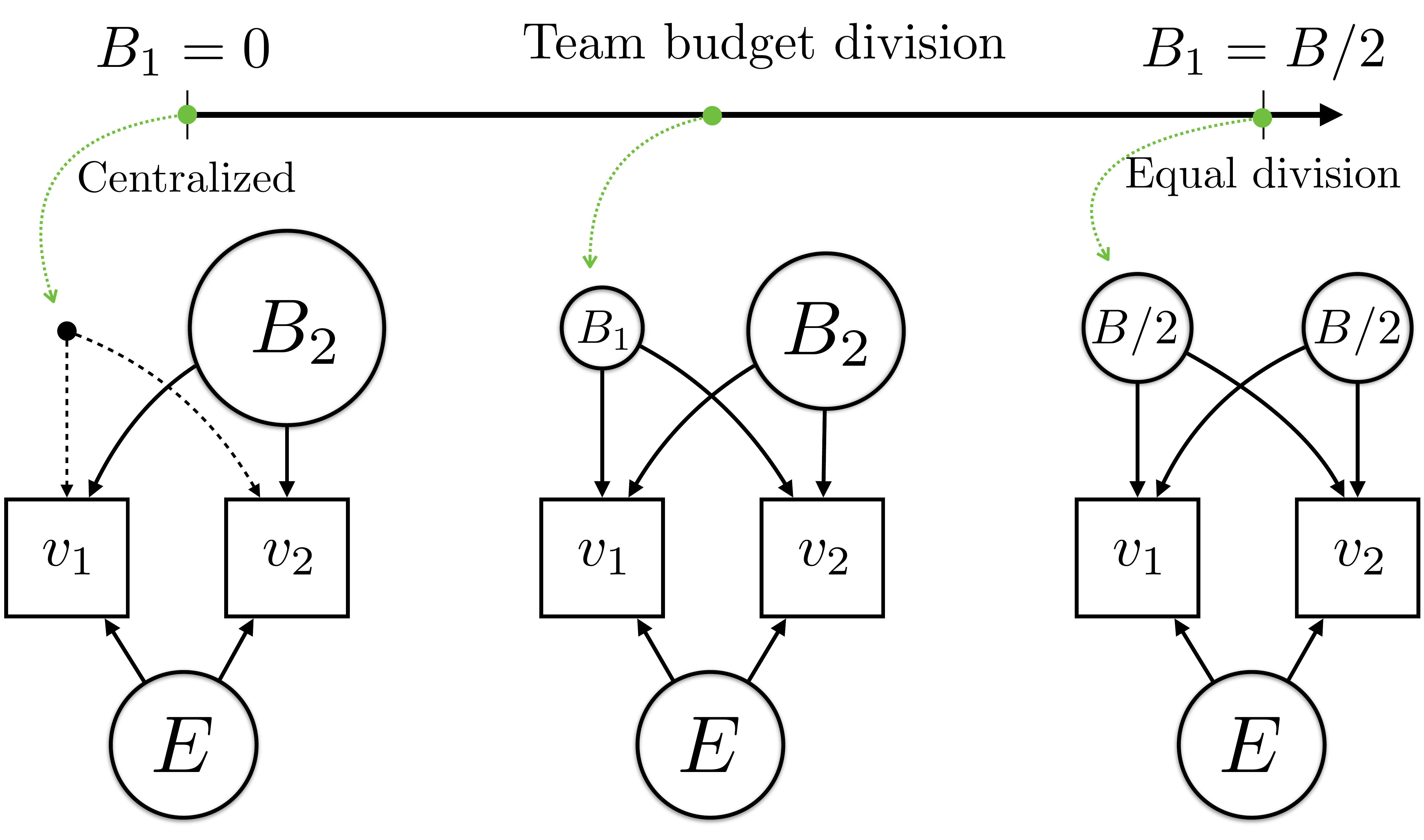}
    \caption{\small The range of command structures in the team Blotto game under consideration in this paper. There is a total of $B$ resources to be divided among sub-colonels 1 and 2, who together compete against the enemy with $E$ resources. Here, $B_1 \in [0,B/2]$ indicates the amount of resources endowed to sub-colonel 1, and hence $B_2 = B- B_1$ resources endowed to sub-colonel 2. The sub-colonels have independent control over their endowed resources, and decide how to allocate them over the same two battlefields. When $B_1 = 0$, the system reduces to a completely centralized command, wherein a single colonel has control over all $B$ resources. This is the classic 1 vs 1 Colonel Blotto game with two battlefields, well studied in \cite{Gross_1950} and \cite{Macdonell_2015}. As $B_1$ increases up to $B/2$, the system becomes less centralized, as the majority portion of resources under player 2's control becomes less dominant.}
    \label{fig:diagram}
\end{figure}

\noindent \textbf{Related works:} Much research in the game theory literature is devoted to characterizing how system performance guarantees can improve through the design of agents' utility functions \cite{Paccagnan_2019,Gairing_2009,Marden_2013}. Optimal designs  facilitate self-interested behaviors that lead to Nash equilibria with good system performance guarantees. Instead of altering agents' utility functions to achieve different system designs, the present paper focuses on how altering the degree of centralization, i.e. through agents' resource endowments, ultimately affects behavior and achievable performance guarantees.


Colonel Blotto games have been studied for 100 years, and are known to be difficult to solve in general. This is largely due to the fact they do not admit pure strategy Nash equilibria \cite{Borel,Golman_2009}. They are commonly formulated as zero- or constant-sum games, and hence equilibrium (mixed) strategies of the opposing colonels are optimal security strategies, i.e. strategies that ensure the highest payoff regardless of the opponent's behavior. The primary literature on Colonel Blotto is concerned with characterizing the value of this highest payoff, or optimal \emph{security value}, in completely centralized settings \cite{Gross_1950,Roberson_2006,Macdonell_2015,Thomas_2018,Kovenock_2020,Vu_2020thesis}. In recent years, simpler variants of Blotto games have been considered to study team settings. For instance, \cite{Kovenock_2012,Gupta_2014b} study coalitional scenarios where two players opposing a common enemy can decide to unilaterally transfer resources among themselves before play begins. The model in \cite{Chandan_2020} considers a similar setup, where a team's players instead decide to pre-commit resources onto battlefields. These models, however, do not incorporate any task interdependence, i.e. the team players compete against the enemy on their own sets of battlefields. In the present paper, we are primarily concerned with the scenario where the players on the team have full overlap over their tasks.

\section{Model}\label{sec:model}

\subsection{Centralized Colonel Blotto game with two battlefields}
Blotto (resp. Enemy) has $B>0$ (resp. $E>0$) resources to allocate over two battlefields. A pure strategy for Blotto (resp. Enemy) is a number $b\in[0,B]$ (resp. $e\in[0,E]$), which is the amount of resources sent to the first battlefield -- the remaining $B-b$ (resp. $E-e$) is thus sent to the second battlefield. Each battlefield $j\in\{1,2\}$ is associated with a value $v_j \geq 0$. Given a strategy profile $(b,e)$, Blotto's payoff is given by
\begin{equation}\label{eq:U}
    U_B(b,e) := v_1 \cdot W(b,e) + v_2 \cdot W(B-b,E-e) 
\end{equation}
where
\begin{equation}\label{eq:W}
    W(x,y) := 
    \begin{cases}
    1, &\text{if } x > y \\
    1/2, &\text{if } x = y \\
    0, &\text{if } x < y
    \end{cases}.
\end{equation}
Enemy's payoff is defined as $U_E(e,b):= v_1 + v_2 - U_B(b,e)$. A mixed strategy for Blotto (resp. Enemy) is any measurable, univariate probability distribution $\Bcdf$ (resp. $\Ecdf$) with compact support $[0,B]$ (resp. $[0,E]$). Here, $\Bcdf$ represents the cumulative distribution function on Blotto's allocation $b$ to the first battlefield. We will use lower case $f$ to denote a distribution's density function. Note that $F_B$ completely determines the probability distribution on $B - b$, the allocation on the second battlefield. The payoff \eqref{eq:U} can be extended to admit mixed strategies, where $U_B(F_B,F_E)$ is the expected payoff with respect to $F_B,F_E$. Let us denote $\Delta(B)$ as the set of all mixed strategies $F_B$ with support on $[0,B]$.

The \emph{value} associated with Blotto's strategy $\Bcdf$ is the worst payoff it attains among Enemy's strategies: 
\begin{equation}
    V(\Bcdf) := \min_{e \in [0,E]} U_B(\Bcdf,e).
\end{equation}
The \emph{security value} is defined as
\begin{equation}
    V^* := \max_{F_B\in\Delta(B)} V(F_B).
\end{equation}
We call a distribution $F_B$ that satisfies $V(F_B) = V^*$ a \emph{security strategy}. Gross and Wagner \cite{Gross_1950} first characterized the security value (equivalently, equilibrium payoff) and some security strategies for the two battlefield Colonel Blotto game. To simplify exposition, we set $v_1 = v_2 = 1$:
\begin{equation}\label{eq:Vstar}
    V^* = 
    \begin{cases}
        1 - \frac{1}{m}, &\text{if } \frac{B}{E} \in (\frac{m-1}{m},\frac{m}{m+1}], \ \text{for } m=1,2,\ldots \\
        1, &\text{if } \frac{B}{E} = 1 \\
        1 + \frac{1}{m}, &\text{if } \frac{B}{E} \in (\frac{m+1}{m}, \frac{m}{m-1}], \ \text{for } m=1,2,\ldots
    \end{cases}
\end{equation}
Hence, $V^*$ is the security value achievable by a completely centralized command structure -- a single player, Blotto, is in control of $B$ resources. We thus refer to $V^*$ as the \emph{centralized security value}. Note the range of budgets is split into a countably infinite number of partitions. We say the budgets are in \emph{partition} $m$ if $\frac{B}{E} \in (\frac{m-1}{m},\frac{m}{m+1}]$.

\subsection{Team Colonel Blotto game with two battlefields}

Blotto's total resource budget $B$ is divided among two sub-players. We will use the terminology `sub-player' or simply `player' instead of `sub-colonel' for the remainder of the paper. Player 1 (resp. player 2) is under control of $B_1$ (resp. $B_2$) resources, with $B_1 + B_2 = B$. Both players have the ability to independently allocate resources to both battlefields -- player $i\in\{1,2\}$ chooses $b_i \in [0,B_i]$ resources to allocate to battlefield 1, and the rest $B_i - b_i$ to battlefield 2. Given $b_1$ and $b_2$, the team's overall resource allocation on battlefield 1 is $b_1 + b_2$, and on battlefield 2 is $B - (b_1 + b_2)$. A mixed strategy for sub-player $i$ is any $F_i \in \Delta(B_i)$. A pair of mixed strategies thus induces $F_B\in \Delta(B)$ on the team's overall allocation on battlefield 1, whose density function is given by the convolution
\begin{equation}
    f_B(x) = (f_1 \circledast f_2)(x) := \int_0^{B_2} f_1(x-s) f_2(s) \,ds
\end{equation}
for all $x \in [0,B]$. With some abuse of notation, we will use $F_B = F_1 \circledast F_2$ to denote the cumulative distribution function of $f_1 \circledast f_2$. Let us define the \emph{distributed security value} as
\begin{equation}\label{eq:distributed_value}
    \begin{aligned}
        V_d^*(B_1) := &\max_{F_B \in \Delta(B)} V(F_B) \\ 
        \quad \text{s.t. } &F_B = F_1 \circledast F_2, \ F_i \in \Delta(B_i), i = 1,2 \\
        &B_2 = B - B_1 
    \end{aligned}
\end{equation}
and a \emph{distributed security strategy} as a pair $F_i \in \Delta(B_i)$, $i=1,2$, that satisfies $V(F_1\circledast F_2) = V_d^*(B_1)$. Since any $F_B=F_1 \circledast F_2$ is a member of $\Delta(B)$, the relation $V_d^* \leq V^*$ follows immediately.

\subsection{Numerical examples and discussion}

Note that setting $B_1 = 0$ recovers the centralized Blotto game, as all $B$ resources are under the control of player 2. Consequently, $V_d^*(0) = V^*$. The choice of how to divide the sub-players' budgets so as to maximize $V_d^*(B_1)$ is thus a trivial task if the centralized option $B_1 = 0$ is available. However, suppose the constraint $B_1 \in [\alpha B, B/2]$ must hold, i.e. player 1 must be in control of at least a positive fraction $\alpha \in (0,1/2)$ of the total resources. Intuition suggests that the division should be made as `centralized' as possible, i.e. setting $B_1 = \alpha B$ leaves $B_2$ with the largest possible portion. Indeed, less centralized divisions restrict the team's ability to employ the jointly randomized strategies necessary for optimizing their security value. Does it hold that 'more centralized' divisions always do better than less centralized divisions? Specifically, is $V_d^*(B_1)$ a monotonically decreasing function in $B_1 \in [0,B/2]$?

To see if this intuition holds, we performed numerical evaluations on an integer version of the team Blotto game, as detailed in the following example. We stress that the following example is provided solely to develop intuition and does not serve as a valid proof for our forthcoming analytical results in Section \ref{sec:proof}.
\begin{figure}
    \centering
    \includegraphics[scale=.5]{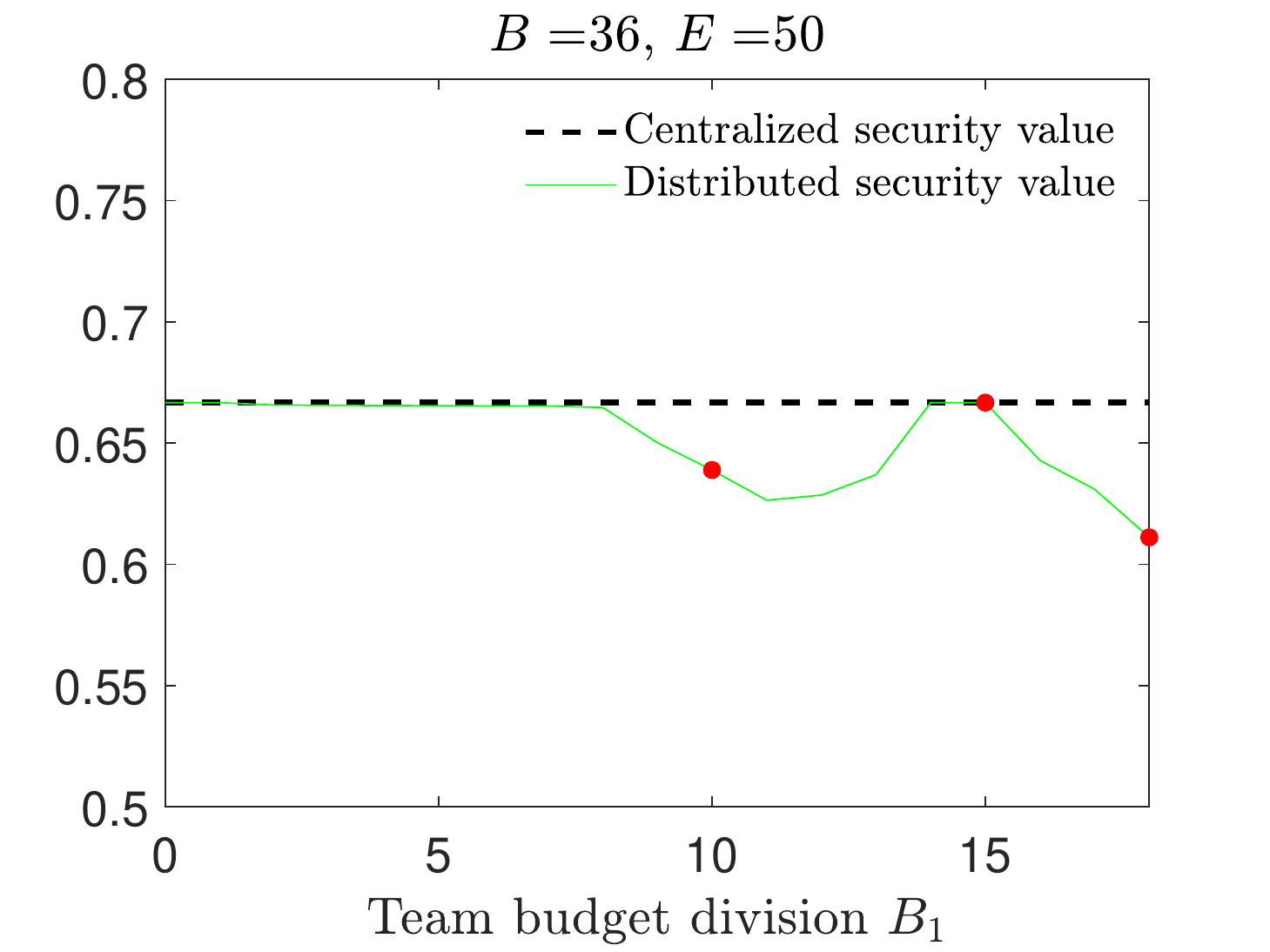}
    \caption{\small Computed security values in an integer team Blotto game using the numerical scheme described in Example 1. Here, $E = 50$, $B=36$, and $B_1 \in \{0,1,\ldots,18\}$. The distributed security value appears to be non-monotonic in the size of the division $B_1$, contrary to intuition. There are divisions, e.g. $B_1 = 15$, $B_2 = 21$, that perform as well as the centralized case. Moreover, this division outperforms more centralized divisions, e.g. $B_1 = 10$, $B_2 = 26$. These points are marked by red dots. }
    \label{fig:slice_36}
\end{figure}
\begin{example}
    Consider an integer Blotto game, i.e. allocations to battlefields are restricted to be integers. The sub-players' mixed strategies are probability vectors of length $B_i+1$, which specify the randomization over all possible allocations $\{0,1,\ldots,B_i\}$ to battlefield 1. We can thus re-formulate \eqref{eq:distributed_value} as a finite-dimensional optimization problem that is non-convex, due to the convolution constraint. We then used numerical optimization techniques to find the distributed security value in this setting. In particular, we applied the nonlinear function solver \texttt{fmincon} in Matlab, to solve the re-formulation of \eqref{eq:distributed_value}. We stress here that the computed security values $V_d^*$ from this scheme may not be completely accurate, as \eqref{eq:distributed_value} is highly non-convex and the nonlinear function solver is not guaranteed to converge to the optimal point. As such, one would treat any resulting numerical computation as a lower bound on the actual security value.  We use such numerical tools here to simply gauge the behavior of $V_d^*$, and to develop our intuition for general theoretical properties one might establish on $V_d^*$ in the non-integer setting.

    Now, consider an enemy budget of $E=50$, and the total resources $B = 36$ are divided among the two sub-players in the range $B_1 \in \{0,\ldots,18\}$. Figure \ref{fig:slice_36} depicts the computed distributed security values in this range. Most notably, $V_d^*$ does not appear to be monotonic in the division $B_1$. Moreover, there are less centralized divisions (e.g. $B_1 = 15$, $B_2 = 21$) that provide better performance guarantees than more centralized divisions (e.g. $B_1 = 10$, $B_2 = 26$).
\end{example}
Our numerical study suggests that our intuition with respect to more centralized divisions of resources always performing better than less centralized divisions is incorrect. While this is merely a numerical study on a single instance of an approximate, integer version of the class of games we consider, it raises interesting questions about how resources should be distributed among multiple team members. In particular, we seek to establish analytically whether $V_d^*$ is in fact, non-monotonic. Moreover, Figure \ref{fig:slice_36} also suggests there are less centralized divisions that can recover the completely centralized security value, whereas slightly more centralized divisions cannot.

In the next section, we identify a broad class of instances of the (non-integer) team Blotto game where such properties do in fact hold. In particular, $V_d^*$ is not monotonic in general, and one can find disjoint intervals within $B_1 \in [0,B/2]$ that correspond with divisions that recover the centralized security value. These properties demonstrate that the resource division problem is highly non-trivial in such interdependent multi-agent environments.

\section{Main results}\label{sec:proof}

In this section, we focus on the (non-integer) team Blotto game and identify a number of non-intuitive properties of the distributed security value $V_d^*$. In particular, we establish for a broad class of instances that there exist disjoint intervals within $B_1 \in [0,B/2]$ corresponding with divisions where the centralized security value can be recovered from a distributed command structure (Proposition \ref{prop:bands}). Additionally, and most importantly, we establish the following:
\begin{theorem}\label{thm:nonmonotone}
    The distributed security value $V_d^*(B_1)$ is not, in general, a monotonic function of $B_1 \in [0,B/2]$.
\end{theorem}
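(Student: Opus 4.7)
The plan is to prove Theorem~\ref{thm:nonmonotone} by exhibiting a single budget pair $(B,E)$ and three divisions $B_1^{(0)} < B_1^{(1)} < B_1^{(2)}$ in $[0,B/2]$ at which $V_d^*$ is strictly smaller at the middle point than at either endpoint. Since $V_d^*(B_1) \leq V^*$ for every $B_1$, the cleanest route is to arrange $V_d^*(B_1^{(0)}) = V_d^*(B_1^{(2)}) = V^*$ and $V_d^*(B_1^{(1)}) < V^*$, so the triple gives a non-monotonic pattern.

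The outer two divisions are obtained directly from earlier results. I would take $B_1^{(0)} := 0$, which trivially yields $V_d^*(B_1^{(0)}) = V^*$, and take $B_1^{(2)}$ to be any interior point of a disjoint band $I \subset (0, B/2]$ on which $V_d^*(B_1) = V^*$, as guaranteed by Proposition~\ref{prop:bands}. Concretely, I would fix a partition index $m \geq 2$ in \eqref{eq:Vstar} so that $V^* = 1 - 1/m$, and import the specific endpoints of $I$ from Proposition~\ref{prop:bands}.

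The middle division $B_1^{(1)}$ is where the work lies. For $B_1^{(1)}$ chosen strictly between $0$ and the band $I$, I would construct an explicit enemy mixed strategy $\Ecdf \in \Delta(E)$ with the property that
\begin{equation*}
    U_B(F_1 \circledast F_2,\, \Ecdf) < V^*
\end{equation*}
for every admissible pair $(F_1, F_2) \in \Delta(B_1^{(1)}) \times \Delta(B - B_1^{(1)})$. The rationale is that the image $\{F_1 \circledast F_2 : F_i \in \Delta(B_i)\}$ sits strictly inside $\Delta(B)$ — convolution smoothens distributions, preventing the sub-players from reproducing atoms or tightly concentrated mass that a centralized security strategy may exploit. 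A natural candidate for $\Ecdf$ is a Gross--Wagner security strategy for the centralized game, possibly augmented with atoms placed precisely where the convolution is forced to be continuous and of bounded density.

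The main obstacle is this last step: certifying the strict payoff gap $V_d^*(B_1^{(1)}) < V^*$. The sub-players retain considerable freedom through independent randomization, so a purely structural argument is not enough — one must quantify the payoff loss. I expect the cleanest route is to identify an enemy strategy with finite support whose atoms sit at locations where $f_1 \circledast f_2$ must have density bounded above by a value depending on $B_1^{(1)}$; the payoff $U_B(F_1 \circledast F_2, \Ecdf)$ can then be written as an explicit integral and bounded uniformly below $V^*$. Combining this quantitative bound with the recoveries at $B_1^{(0)}$ and $B_1^{(2)}$ yields the desired non-monotonic triple and completes the proof.
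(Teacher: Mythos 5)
Your skeleton is the same as the paper's: exhibit three divisions with $V_d^*=V^*$ at the outer two (via $B_1=0$ and a band from Proposition~\ref{prop:bands}) and $V_d^*<V^*$ at the middle one. The outer two points are fine. The problem is that the entire content of the theorem lives in the middle step, and what you offer there is a plan rather than a proof --- and the plan rests on a false premise. You argue that the enemy can exploit the fact that ``convolution smoothens distributions, preventing the sub-players from reproducing atoms or tightly concentrated mass,'' and you propose to place enemy atoms where $f_1\circledast f_2$ ``must have density bounded above.'' But the convolution of two atomic distributions is atomic: if each sub-player plays a finitely supported discrete strategy, so does the team, with atoms of mass as large as the products of the individual masses. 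Indeed, Proposition~\ref{prop:bands} itself reconstructs the purely atomic centralized security strategy $f_B(x)=\frac{1}{m}\sum_{j}\delta(x-jd)$ as a convolution. So no density bound of the kind you describe exists, and an enemy strategy designed around one cannot certify $U_B(F_1\circledast F_2,\Ecdf)<V^*$ uniformly over all admissible pairs.

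The actual obstruction at intermediate $B_1$ is combinatorial, not about smoothness: it concerns \emph{where} the product-form mass can sit, not how concentrated it can be. The paper first characterizes $\Omega_B$ exactly (Lemma~\ref{lem:SS_conditions}): a security strategy must place mass $1/m$ in each of $m$ short intervals $I_j$ spaced $d$ apart, subject to the window condition \eqref{eq:SS_property2} that no interval of length $d$ captures more than $1/m$ of the mass. Lemma~\ref{lem:SS1_conditions} then shows that for a product $F_1\circledast F_2$ to satisfy \eqref{eq:SS_property1} when $m$ is even and $B_1\in(d-r_B,d)$, the mass of $F_1$ must split $1/2$--$1/2$ between two clusters and $F_2$ must split evenly among $m/2$ clusters, so that the convolution ``tiles'' the $I_j$ in adjacent pairs. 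Lemma~\ref{lem:no_SS2} then observes that since $B_1<d$, the two clusters of $F_1$ are less than $d$ apart, so a single window of length $d$ necessarily captures the full $1/m$ mass in some $I_{2i}$ \emph{plus} a positive amount from $I_{2i-1}$, violating \eqref{eq:SS_property2}; equivalently, the enemy has a pure strategy achieving strictly more than its security value against every such $F_B$. To repair your proposal you would need to replace the density-bound argument with this kind of spacing argument (or something equivalent), which is where essentially all of the work in the paper's proof is concentrated.
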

\begin{remark}
    The statement of Theorem 1 would hold even if it is true for only a single game instance. However, our approach to verify Theorem 1 studies a wide range of game instances where we are able to prove the non-monotonicity of $V_d^*(B_1)$. The instances we identify do not exhaust all two battlefield Blotto games. Indeed, there may be an even broader range of instances for which non-monotonicity holds (left for future work). Nonetheless, our analysis demonstrates that the non-monotonicity property is not an anomalous edge case.
\end{remark}

Our approach to proving Proposition \ref{prop:bands} and Theorem 1 is as follows. We first state the necessary and sufficient conditions on $F_B$ to be a (centralized) security strategy. Call $\Omega_B$ the set of all centralized security strategies. We then show on particular disjoint intervals of divisions within $B_1 \in [0,B/2]$, one can reconstruct a security strategy $F_1 \circledast F_2 \in \Omega_B$ (Proposition \ref{prop:bands}).  We then identify a class of game instances parameterized by the budgets $B/E$ for which there are at least two such intervals, and characterize a range of divisions $B_1$ that lie between two intervals where $F_B = F_1 \circledast F_2 \notin \Omega_B$ for any $F_1\in\Delta(B_1)$, $F_2\in\Delta(B_2)$, and hence $V_d^*(B_1) < V^*$. This fact establishes Theorem 1.

Throughout, we will assume that $B < E$ and $v_1 = v_2 = 1$ to simplify exposition. The arguments can be generalized to $v_1 \neq v_2$. Let $d:= E-B$ be the budget difference. In partition $m$, i.e. $B/E \in (\frac{m-1}{m},\frac{m}{m+1}]$, it holds that $(m-1)d < B \leq md$. We define $r_B := B - (m-1)d \in (0,d]$. We denote $[m]$ as the set of integers $\{1,\ldots,m\}$.
\begin{lemma}[Necessary and sufficient conditions for centralized security strategies]\label{lem:SS_conditions}
    Suppose $B/E<1$ and $r_B < d$. Then $F_B \in \Omega_B$ if and only if
    \begin{align}
		\int_{I_j} dF_B &= 1/m, \quad \forall j \in [m] \label{eq:SS_property1} \tag{SS-1}  \\
		\int_{I_{j+1} \cap [0,x]} dF_B &\leq \int_{I_j \cap [0,x-d)} dF_B \tag{SS-2} \label{eq:SS_property2}  \\ 
		& \quad \forall j \in [m-1], \ \forall x \in I_{j+1} \nonumber
	\end{align}
	where $I_1 := [0, r_B]$ and $I_j := ((j-1)d, (j-1)d + r_B]$ for all $j=2,\ldots,m$.
\end{lemma}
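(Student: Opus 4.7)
The plan is to translate the security condition $V(F_B) = V^*$ into a direct pointwise inequality on the cumulative distribution function, and then recognize (SS-1) combined with (SS-2) as an equivalent reformulation of it.

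First I would derive a clean expression for $U_B(F_B, e)$ against a pure enemy response. Any security strategy in this setting is atomless (a standard feature of two-battlefield Blotto; see \cite{Gross_1950,Macdonell_2015}), so assuming $F_B$ has a density,
\[
U_B(F_B, e) = 1 - F_B(e) + F_B(e - d),
\]
with the convention $F_B(x) = 0$ for $x < 0$. Since $V^* = 1 - 1/m$, the condition $V(F_B) = V^*$ becomes the uniform $d$-increment bound
\[
F_B(e) - F_B((e - d) \vee 0) \leq 1/m \quad \text{for all } e \in [0, E], \qquad (\star)
\]
i.e., $F_B$ never accumulates more than $1/m$ of mass on any window of length $d$.

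For sufficiency, I would assume (SS-1) and (SS-2) and verify $(\star)$. By (SS-1), each of the $m$ intervals $I_j$ carries mass exactly $1/m$; together they account for all the unit mass, so every intermediate gap of the form $((j-1)d + r_B, jd]$ must have zero mass. The verification then splits by the location of $e$: for $e$ in a gap or at its boundary, $F_B$ is locally constant at $i/m$ for the appropriate $i$, and a direct computation yields $F_B(e) - F_B((e-d) \vee 0) \in \{0, 1/m\}$; for $e \in I_1$, $F_B(e) \leq 1/m$ is immediate; and for $e \in I_{j+1}$ with $j \geq 1$, one rewrites $F_B(e) - F_B(e - d) = 1/m + \int_{I_{j+1} \cap [0, e]} dF_B - \int_{I_j \cap [0, e - d]} dF_B$, which is at most $1/m$ precisely by (SS-2).

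For necessity, I would assume $(\star)$ and derive both conditions. The crucial step is to evaluate $(\star)$ at the $m$ points $e = jd$, $j = 1, \ldots, m$. Since $r_B < d$ gives $B < md < E$, we have $F_B(md) = 1$ and $F_B(0) = 0$, so summing the $m$ inequalities $F_B(jd) - F_B((j-1)d) \leq 1/m$ telescopes to $1 \leq 1$, forcing each inequality to be tight and pinning the mass on every band $((j-1)d, jd]$ at exactly $1/m$. A second round of evaluations at the right endpoints $e = (j-1)d + r_B$ then yields the comparison $a_j \leq a_{j-1}$, where $a_j$ denotes the mass of $I_j$; combining this with $a_m = 1/m$ and $a_j \leq 1/m$ forces $a_j = 1/m$ for every $j$, which also drains every gap mass to zero and establishes (SS-1). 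Condition (SS-2) then follows immediately by reapplying $(\star)$ at a generic $e \in I_{j+1}$ and simplifying via (SS-1). The main obstacle is precisely this tightness-propagation step: one must carefully sequence the evaluations at the two families of endpoints $\{jd\}$ and $\{(j-1)d + r_B\}$ and combine them with the total-mass identity to eliminate all gap mass. A secondary delicate point is the atomlessness justification needed to arrive at $(\star)$, which can be handled via a local-perturbation argument showing that any atom allows the adversary to capture strictly more than $1/m$ mass in a small neighborhood, contradicting $V(F_B) = V^*$.
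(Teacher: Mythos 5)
Your sufficiency direction follows essentially the same route as the paper (reduce to the bound $\sup_e\big[F_B(e)-F_B(e-d)\big]\le 1/m$ and check it by cases on the location of $e$), and that half is fine. The necessity direction, however, rests on a claim that is false: that every security strategy is atomless. The paper's own Proposition \ref{prop:bands} exhibits a security strategy that is \emph{purely atomic}, namely $f_B(x)=\frac{1}{m}\sum_{j=0}^{m-1}\delta(x-jd)$ in \eqref{eq:d_sep}, which places mass $1/m$ at $x=0$. Your proposed justification for atomlessness --- that an atom lets the adversary capture strictly more than $1/m$ of mass in a small window --- does not work here, because a tie on a battlefield awards only half its value (see \eqref{eq:W}); an adversary sitting on an atom collects only half of it, so atoms at the lattice points $jd$ are perfectly compatible with $V(F_B)=V^*$. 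Once atomlessness is dropped, the telescoping step collapses: $F_B(0)$ need not be $0$, the sum $\sum_{j=1}^m\big[F_B(jd)-F_B((j-1)d)\big]$ equals $1-F_B(0)\le 1$, and no tightness is forced. Concretely, for the strategy \eqref{eq:d_sep} the band $((m-1)d,\,md]$ carries \emph{zero} mass, so your intermediate conclusion that every band $((j-1)d,jd]$ has mass exactly $1/m$ is simply false for a legitimate security strategy, and the subsequent chain $a_j\le a_{j-1}$ never gets off the ground.

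Structurally, your necessity argument is genuinely different from the paper's: the paper avoids the telescoping issue by building, for each $i\in[m]$, a partition $\{Y_j^i\}_{j\in[m]}$ of $[0,B]$ into one copy of $I_i$ and $m-1$ intervals of length exactly $d$, each of which is forced to carry mass at most $1/m$; summing to total mass $1$ then pins every $I_j$ at $1/m$ and drains the complement $S$. Your two-stage evaluation at $\{jd\}$ and $\{(j-1)d+r_B\}$ is an attractive and more economical alternative, and it does go through verbatim for atomless $F_B$, but to make it a proof you would need either (i) a correct tie-aware payoff formula (with half-weights on atoms at the window endpoints $e$ and $e-d$) together with a reworked telescoping identity that accounts for the possible atom at $0$, or (ii) a separate argument handling the atomic part of $F_B$. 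As written, the key tightness-propagation step --- which you yourself flag as the main obstacle --- fails on exactly the strategies the lemma is meant to cover.
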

Intuitively, \eqref{eq:SS_property1} says a security strategy must have equal probability mass located in small intervals spaced $d$ apart. Condition \eqref{eq:SS_property2} states the probability mass must be placed in such a way that prevents the Enemy from having an allocation $e \in [0,E]$ such that $U_B(F_B,e) < V^*$.
\begin{remark}
    Conditions \eqref{eq:SS_property1} and \eqref{eq:SS_property2} are special cases of the properties identified in \cite{Macdonell_2015} that ensure equilibrium exchangeability in a more general class of two battlefield Blotto games. That is, any $F_B$ satisfying these properties, paired with any $F_E$ satisfying similar properties, forms a Nash equilibrium. While these properties satisfy sufficiency -- any equilibrium strategy in a zero or constant-sum game is also a security strategy -- our proof of Lemma \ref{lem:SS_conditions} also establishes necessity.  
\end{remark}
\begin{proof}
    Enemy's payoff from using a pure strategy $x \in [0,E]$ against the strategy $F_B \in \Delta(B)$ can be expressed as
    \begin{equation}
        U_E(x,F_B) = F_B(x) - F_B(x-d) + 1.
    \end{equation}
    Recall enemy's security value is given by $1+1/m$ in partition $m$ \eqref{eq:Vstar}.
    
	\noindent $\Leftarrow:$  Suppose properties \eqref{eq:SS_property1} and \eqref{eq:SS_property2} hold. Define $D(x) := F_B(x) - F_B(x-d)$ for $x \in [0,E]$. It suffices to show that $\max_{x\in[d,B]} D(x) = 1/m$. Indeed, $D(x) = 1/m$ for $x \in \{jd\}_{j=1}^{m-1} \cup \{jd+r_B\}_{j=1}^{m-1}$, by property \eqref{eq:SS_property1}. Furthermore, $D(x) \leq 1/m$ for any $x \in (jd, jd+r_B)$ and any $j \in [m-1]$, by property \eqref{eq:SS_property2}.
	
	\noindent $\Rightarrow$: Suppose $V(F_B) = V^*$, i.e. it holds that
	\begin{equation}\label{eq:max_prob}
		\max_{x \in [d,B]} D(x) = 1/m.
	\end{equation}
	Suppose $F_B$ satisfies property \eqref{eq:SS_property1} but not property \eqref{eq:SS_property2}. Then there exists a $k\in[m-1]$ and $x \in I_{k+1}$ such that $\int_{I_{k+1} \cap [0,x]} dF_B > \int_{I_k \cap [0,x-d)} dF_B$. Hence,
	\begin{equation}
		\begin{aligned}
			D(x) = \int_{I_{k+1} \cap [0,x]} dF_B + \frac{1}{m} - \int_{I_k \cap [0,x-d)} dF_B > 1/m
		\end{aligned}
	\end{equation}
	which contradicts \eqref{eq:max_prob}. Now, suppose $F_B$ does not satisfy property \eqref{eq:SS_property1}. Let $S = [0,B]\setminus \bigcup_{j=1}^m I_j$. We split into two scenarios. First, suppose $\int_S dF_B = 0$. Then there is a $k\in[m]$ with $\int_{I_k} dF_B > 1/m$, contradicting \eqref{eq:max_prob}. Second, suppose $\int_S dF_B > 0$. Define $m$ collections of intervals $Y^i := \{Y_j^i\}_{j\in[m]}$ as follows: for each $i\in[m]$,
	\begin{equation}
		\begin{aligned}
            Y_i^i &= ((i-1)d, (i-1)d + r_B] = I_i &  \\
			Y_1^i &= [0,d] &(\text{if } i \neq 1) \\
			Y_j^i &= ((j-1)d,jd], &j=2,\ldots,i-1 \\
			Y_j^i &= ((j-2)d + r_B, (j-1)d + r_B], &j=i+1,\ldots,m
		\end{aligned}
	\end{equation}
	By construction, $Y_j^i \cap Y_k^i = \varnothing$ for any $j,k \in [m]$ and $\bigcup_{j\in[m]} Y_j^i = [0,B]$. Note the length of each $Y_j^i$, $j \neq i$, is precisely $d$, and the length of $Y_i^i$ is $0 < r_B < d$. By \eqref{eq:max_prob}, it must hold that $\int_{Y_j^i} dF_B = 1/m$ for every $i,j \in [m]$. Consequently, it must also hold that $\int_{I_j} dF_B = 1/m$ for all $j \in [m]$. We  obtain $\int_S dF_B = 0$, a contradiction. This establishes the result. Note in the latter scenario we do not make any assumption on whether property \eqref{eq:SS_property2} is satisfied or not.
\end{proof}

Our next result identifies the disjoint intervals of divisions within $B_1 \in [0,B/2]$ for which the distributed security value coincides with the centralized security value.
\begin{proposition}\label{prop:bands}
	If 
	\begin{equation}
    	\begin{aligned}
    	    B_1 &\in \bar{I}_{k_1},
    	\end{aligned}
	\end{equation}
	where $k_1$ is any factor of $m$, then $V_d^*(B_1) = V^*$. Here, $\bar{I}$ indicates the closure of an interval $I$.
\end{proposition}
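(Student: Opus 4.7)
The plan is to exhibit, for every $B_1 \in \bar{I}_{k_1}$ with $k_1$ a factor of $m$, explicit marginals $F_1\in\Delta(B_1)$ and $F_2\in\Delta(B_2)$ such that the convolution $F_B := F_1 \circledast F_2$ satisfies both \eqref{eq:SS_property1} and \eqref{eq:SS_property2}. By Lemma~\ref{lem:SS_conditions} this yields $F_B \in \Omega_B$ and hence $V_d^*(B_1) \geq V(F_B) = V^*$; combined with the trivial bound $V_d^*(B_1) \leq V^*$, the proposition follows.

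Let $k_2 := m/k_1$ and split $r_B = r_B' + r_B''$, where $r_B' := B_1 - (k_1-1)d$ and $r_B'' := r_B - r_B'$; the hypothesis $B_1 \in \bar{I}_{k_1}$ forces both $r_B', r_B'' \in [0,r_B]$. I would take $F_1$ to place mass $1/k_1$ uniformly on each of the $k_1$ ``slots'' $[i d,\; i d + r_B']$, $i = 0,\ldots,k_1-1$, and $F_2$ to place mass $1/k_2$ uniformly on each of the $k_2$ slots $[j k_1 d,\; j k_1 d + r_B'']$, $j = 0,\ldots,k_2-1$. A slot of width zero is reinterpreted as an atom of the appropriate mass; because $r_B > 0$, at least one of $F_1, F_2$ remains continuous and therefore so does $F_B$.

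The remainder is verification. Budget feasibility reduces to the identities $(k_1-1)d + r_B' = B_1$ and $(k_2-1) k_1 d + r_B'' = B_2$. For \eqref{eq:SS_property1}, the division algorithm writes each $\ell \in \{0,\ldots,m-1\}$ uniquely as $\ell = a + b k_1$ with $(a,b) \in \{0,\ldots,k_1-1\}\times\{0,\ldots,k_2-1\}$; the sum of the $a$-th slot of $F_1$ and the $b$-th slot of $F_2$ is supported in $[\ell d,\; \ell d + r_B]$ and contributes mass $1/(k_1 k_2) = 1/m$ to that interval. The $m$ target intervals are pairwise disjoint and $F_B$ is atom-free, so $F_B$ assigns mass $1/m$ to each $I_j$, $j\in[m]$. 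For \eqref{eq:SS_property2}, I would invoke translation invariance: the density of $F_B$ inside slot $[\ell d, \ell d + r_B]$ at offset $y$ is the convolution of a uniform on $[0, r_B']$ with a uniform on $[0, r_B'']$ evaluated at $y$, independent of $\ell$. This gives $F_B(x) - F_B(jd) = F_B(x-d) - F_B((j-1)d)$ for every $j \in [m-1]$ and $x \in I_{j+1}$, so \eqref{eq:SS_property2} holds with equality.

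The main obstacle I anticipate is combinatorial bookkeeping: confirming that the chosen slots of $F_1$ and $F_2$ really do fit inside $[0,B_1]$ and $[0,B_2]$ for every $B_1 \in \bar{I}_{k_1}$, and that their pairwise sums realize exactly the $m$ target intervals without overlap or omission. A minor secondary subtlety is to handle the degenerate endpoints $r_B'=0$ and $r_B''=0$ uniformly with the generic continuous case; the assumption $r_B > 0$ keeps $F_B$ atom-free throughout. Once the bookkeeping is in place, invoking Lemma~\ref{lem:SS_conditions} closes the argument.
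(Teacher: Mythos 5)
Your proposal is correct and follows essentially the same route as the paper: both factor the $m$-atom centralized security strategy supported on $\{0,d,\ldots,(m-1)d\}$ into a convolution of a $k_1$-point comb with spacing $d$ (player 1) and a $k_2$-point comb with spacing $k_1 d$ (player 2), via the unique decomposition $\ell = a + b k_1$, and then invoke Lemma~\ref{lem:SS_conditions}. The only difference is that the paper keeps Dirac atoms at fixed locations and checks they remain feasible across all of $\bar{I}_{k_1}$, whereas you smear each atom into a uniform slot of width $r_B'$ or $r_B''$ adapted to $B_1$ --- a harmless variation that also renders $F_B$ atom-free and makes the verification of \eqref{eq:SS_property2} by translation invariance cleaner.
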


\begin{proof}
	The strategy $F_B \in \Delta(B)$, whose density is given by
	\begin{equation}\label{eq:d_sep}
	    f_B(x) = \frac{1}{m} \sum_{j=0}^{m-1} \delta(x-jd)
	\end{equation}
	satisfies \eqref{eq:SS_property1} and \eqref{eq:SS_property2}, and thus is a member of $\Omega_B$. Now, let $k_1$ be any factor of $m$. The approach is to reconstruct $F_B$ \eqref{eq:d_sep} through the convolution $F_1 \circledast F_2$. Consider $f_1(x) = \frac{1}{k_1}\sum_{j=0}^{k_1-1} \delta(x-jd)$ and $f_2(x) = \frac{1}{k_2}\sum_{j=0}^{k_2-1} \delta(x-j\cdot k_1d)$, where $k_2$ is such that $k_1 k_2 = m$. Then $(f_1 \circledast f_2)(x) = f_B(x)$ for all $x \in [0,B]$ -- we have reconstructed $F_B$ through the convolution of two independent strategies. However, $F_1$ and $F_2$ must also be feasible for the budget division $B_1$, $B_2$. They are feasible for the range of budgets  $B_1 \in \left[(k_1 - 1)d, (k_1 - 1)d + (B - (m-1)d) \right]$.
\end{proof}
Because 1 is a factor of $m$, the centralized security value can always be achieved on the ``edge" interval $B_1 \in I_1 = [0, r_B]$. To prove the non-monotonicity of $V_d^*$, we need to show  $V_d^*(B_1) < V^*$ for a division $B_1$ that lies in between consecutive intervals $I_{k}, I_{k'}$, where $k,k'$ are consecutive factors of $m$. In particular, let  us focus on when $m > 2$ is even: $I_1$ and $I_2$ are two such consecutive intervals, and any $B_1 \in (r_B,d)$ lies in between them. In the next result, we identify necessary and sufficient conditions for which a distribution $F_B$ satisfies the first property \eqref{eq:SS_property1} for a centralized security strategy. We first need the following definitions.
\begin{definition}
	Given an interval $P = [p^\ell,p^r] \subseteq [0,X]$ and $F \in \Delta(X)$ such that $\int_P dF > 0$, we say $P$ is \emph{reduced} with respect to $F$ if
	\begin{equation}
		P = \bigcap \left\{P' = [p'^\ell,p'^r] \subseteq P : \int_{P'} dF = \int_P dF \right\}
	\end{equation}
\end{definition}
Consequently, for a reduced interval $P$ with respect to $F$, it holds that $\int_{[p^\ell,p^\ell+\epsilon)} dF$ and $\int_{(p^r -\epsilon,p^r]} dF > 0$ for any $\epsilon > 0$.

\begin{lemma}\label{lem:SS1_conditions}
    Suppose $B/E < 1$, $m$ is even, and $B_1 \in (d-r_B,d)$. Then $F_B = F_1 \circledast F_2$ satisfies \eqref{eq:SS_property1}, with $F_1 \in \Delta(B_1)$ ($F_2 \in \Delta(B_2)$), if and only if there exist reduced intervals $\{P_i = [p_i^\ell,p_i^r] \}_{i=1,2}$ with respect to $F_1$ and $\{Q_i = [q_i^\ell,q_i^r]\}_{i=1}^{m/2}$ with respect to $F_2$ such that for all $i \in [m/2]$,
    \begin{align}
        [p_1^\ell+q_i^\ell,p_1^r+q_i^r] &\subseteq I_{2i-1} \text{ and } [p_2^\ell+q_i^\ell,p_2^r+q_i^r] \subseteq I_{2i} \label{eq:even_spacing}
	\end{align}
	and it holds that
	\begin{align}
	    \int_{P_1} dF_1 &= \int_{P_2} dF_1 = 1/2 \label{eq:even_mass_location1} \\
		\int_{Q_i} dF_2 &= 2/m, \ \forall i \in [m/2] \label{eq:even_mass_location2}
	\end{align}
\end{lemma}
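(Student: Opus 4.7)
The proof proceeds by two implications. For the reverse implication ($\Leftarrow$), the mass hypotheses \eqref{eq:even_mass_location1}--\eqref{eq:even_mass_location2} imply $F_1$ places all of its mass on $P_1 \cup P_2$ and $F_2$ places all of its mass on $\bigcup_i Q_i$; the placement conditions \eqref{eq:even_spacing}, together with the disjointness of the $I_j$'s, force $P_1, P_2$ to be disjoint and the $Q_i$'s to be pairwise disjoint. By Fubini applied to $(x,y) \mapsto x+y$, the mass in $I_{2i-1}$ receives contributions only from the pair $(P_1, Q_i)$, yielding $F_B(I_{2i-1}) = F_1(P_1) F_2(Q_i) = (1/2)(2/m) = 1/m$, and symmetrically $F_B(I_{2i}) = 1/m$, which is \eqref{eq:SS_property1}.

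For the forward implication ($\Rightarrow$), assume $F_B = F_1 \circledast F_2$ satisfies \eqref{eq:SS_property1}. Summing across $j$ yields $F_B(\bigcup_j I_j) = 1$, hence $\mathrm{supp}(F_1) + \mathrm{supp}(F_2) \subseteq \bigcup_j I_j$. The hypothesis $B_1 \in (d-r_B, d)$ is the key structural lever: since $B_1 < d$, for any $t \in \mathrm{supp}(F_2)$ the translate $t + \mathrm{supp}(F_1)$ cannot intersect more than two consecutive $I_j$'s; and since $B_1 > d - r_B$ exceeds the gap width between consecutive $I_j$'s, the support of $F_1$ must straddle two of these intervals (shifted by $-t$). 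Combined with $F_B(I_1), F_B(I_m) > 0$ (which pin down $\inf \mathrm{supp}(F_1) \leq r_B$ and $\sup \mathrm{supp}(F_1) \geq B_1 - r_B$), one concludes that $\mathrm{supp}(F_1)$ decomposes into a left cluster near $0$ and a right cluster near $B_1$, separated by a void of width at least $d - r_B$; define $P_1$ and $P_2$ as the corresponding reduced intervals. Using the evenness of $m$, I would pair each odd $I_{2i-1}$ with the immediately following even $I_{2i}$: for any $t \in \mathrm{supp}(F_2)$, the left-cluster shift lands $t$ into some $I_{2i-1}$ while the right-cluster shift lands $t$ into $I_{2i}$, inducing a partition of $\mathrm{supp}(F_2)$ into $m/2$ reduced intervals $Q_i$. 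Applying \eqref{eq:SS_property1} to paired indices gives $F_1(P_1) F_2(Q_i) = F_1(P_2) F_2(Q_i) = 1/m$, which in turn yields $F_1(P_1) = F_1(P_2) = 1/2$ and $F_2(Q_i) = 2/m$; the placement conditions \eqref{eq:even_spacing} follow immediately from the pair decomposition.

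The main obstacle is rigorously establishing the bimodal structure of $\mathrm{supp}(F_1)$ and ruling out ``middle mass'' -- that is, any point $z$ strictly between the two clusters, combined with any $t \in \mathrm{supp}(F_2)$, must yield $z + t \in \bigcup_j I_j$, yet for generic $t$ this would push $z + t$ into one of the gaps $((j-1)d + r_B, jd]$ and contradict \eqref{eq:SS_property1}. Making this precise requires a careful case analysis on the relative positions of $t$ and the extremal support points of $F_1$, propagating the constraint $\mathrm{supp}(F_1) + \mathrm{supp}(F_2) \subseteq \bigcup_j I_j$ across all $t$. The evenness of $m$ is essential for the clean $1/2$--$1/2$ mass split, because an odd $m$ would leave one unpaired $I_j$ and would not force the cluster masses to be equal.
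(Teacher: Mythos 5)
Your $\Leftarrow$ direction is fine and agrees in substance with the paper's (which simply declares sufficiency clear); the Fubini/product computation is a reasonable way to make it precise. The problem is the $\Rightarrow$ direction, where you have explicitly deferred the one step that carries the entire content of the lemma: showing that \emph{all} of $F_1$'s mass sits in exactly two clusters and \emph{all} of $F_2$'s mass sits in exactly $m/2$ clusters, consistently aligned with the odd/even $I_j$'s across every point of $\mathrm{supp}(F_2)$. Everything downstream of that claim (the pairing, $F_1(P_1)F_2(Q_i)=1/m$, and hence the $1/2$ and $2/m$ masses) is routine, but the claim itself is exactly what must be proved, and your sketch concedes in its last paragraph that it is not proved. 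Moreover, the one justification you do offer for the bimodal structure --- that $B_1>d-r_B$ ``exceeds the gap width,'' so $\mathrm{supp}(F_1)$ must straddle two intervals --- is not a valid inference: the length of the ambient interval $[0,B_1]$ says nothing about where $F_1$ actually places mass, and a priori $F_1$ could be a point mass (indeed, if $r_B>d/2$ and $B_1\in(d-r_B,\,r_B]$, taking $F_1=\delta_0$ and $f_2=\frac1m\sum_{j=0}^{m-1}\delta(\cdot-jd)$ satisfies \eqref{eq:SS_property1} with no straddling at all). The hypotheses that actually do the work are $B_1<d$ (so each translate of $\mathrm{supp}(F_1)$ meets at most two consecutive $I_j$'s) together with $B_1>r_B$ (so $B_2<(m-1)d$ and $F_2$ alone cannot reach both $I_1$ and $I_m$); you invoke neither correctly.

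For comparison, the paper sidesteps a direct characterization of the supports. It first observes that if no reduced intervals satisfying \eqref{eq:even_spacing} exist then some $I_k$ receives zero mass, then fixes a \emph{maximal} collection of reduced intervals satisfying \eqref{eq:even_spacing} and runs a four-case argument: any residual $F_1$- or $F_2$-mass outside the maximal $P_i$'s or $Q_i$'s must, by maximality, push $F_B$-mass into $S=[0,B]\setminus\bigcup_j I_j$, while if there is no residual mass but \eqref{eq:even_mass_location1} or \eqref{eq:even_mass_location2} fails then $\int_{I_j}dF_B\neq 1/m$ for some $j$; either way \eqref{eq:SS_property1} is contradicted. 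That maximality device is what substitutes for your missing ``no middle mass'' argument. To complete your proof you would need either to adopt something like it, or to actually carry out the support-propagation case analysis you describe; as written, the forward implication is asserted rather than established.
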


\begin{proof}
	Note that any $\{P_i\}_{i=1}^{2}$, $\{Q_i\}_{i=1}^{m/2}$  satisfying \eqref{eq:even_spacing} implies that $(p_2^\ell - p_1^r) - (q_i^r - q_i^\ell) \geq d - r_B$. Since $p_2^\ell - p_1^r \leq B_1$ and $q_i^r - q_i^\ell \geq 0$, this requires that $B_1 \geq d - r_B$. Furthermore, $B_1 \in (r_B,d)$ implies $B_2 \in ((m-2)d+r_B, (m-1)d)$.
	
	\noindent $\Leftarrow$: Conditions \eqref{eq:even_spacing} - \eqref{eq:even_mass_location2} are clearly sufficient conditions for $F_B = F_1\circledast F_2$ to satisfy \eqref{eq:SS_property1}. Intuitively, $F_1$ concentrates equal probability mass at the ends of the interval $[0,X_1]$, and $F_2$  places equal probability mass at $m/2$ locations spaced roughly $2d$ apart.
	When $F_1$ and $F_2$ are convolved, $F_1$ is `duplicated' $m/2$ times at the locations $\{q_i^\ell\}_{i=1}^{m/2}$, and the resulting mass is contained in $\bigcup_{j\in[m]} I_j$.
	
	\noindent $\Rightarrow$: Suppose $F_B = F_1\circledast F_2$ satisfies condition \eqref{eq:SS_property1}, i.e. $\int_{I_j} dF_B = 1/m$ for all $j\in[m]$. Denote $S = [0,B]\setminus \bigcup_{j\in[m]} I_j$. If there do not exist any $\{P_i\}_{i=1}^{2}$, $\{Q_i\}_{i=1}^{m/2}$ that satisfy \eqref{eq:even_spacing}, then for some $k \in [m]$, one cannot find any reduced intervals $P$ (w.r.t $F_1$) and $Q$ (w.r.t $F_2$) such that $[p^\ell + q^\ell, p^r+q^r] \subseteq I_k$. It must hold that $\int_{I_k} F_B = 0$, contradicting \eqref{eq:SS_property1}
	
	Now, let us assume reduced intervals $\{P_i\}_{i=1}^{2}$, $\{Q_i\}_{i=1}^{m/2}$ that satisfy \eqref{eq:even_spacing} do exist.
	Suppose (for sake of contradiction) any such reduced intervals $\{P_i\}_{i=1}^{2}$, $\{Q_i\}_{i=1}^{m/2}$ do not satisfy \eqref{eq:even_mass_location1} and \eqref{eq:even_mass_location2}. Furthermore, suppose that 
	\begin{itemize}
		\item $Q_i \supseteq Q_i'$ for any other reduced $Q_i'$ such that $P_1,P_2,\{Q_i',Q_{-i}\}$ satisfies \eqref{eq:even_spacing}.
		\item $P_1 \supseteq P_1'$ for any other reduced $P_1'$ such that $P_1',P_2,\{Q_i\}_{i=1}^m$ satisfies \eqref{eq:even_spacing}.
		\item $P_2 \supseteq P_2'$ for any other reduced $P_2'$ such that $P_1,P_2',\{Q_i\}_{i=1}^m$ satisfies \eqref{eq:even_spacing}. 
	\end{itemize}
	Intuitively, $\{P_i\}_{i=1}^{2}, \{Q_i\}_{i=1}^{m/2}$ form a `largest' set of reduced intervals that still satisfy \eqref{eq:even_spacing}. It holds that
	\begin{enumerate}[label=(\alph*)]
		\item $\int_{P_1\cup P_2} dF_1 = 1$ and $\int_{P_1} dF_1 \neq 1/2$, or
		\item $\int_{P_1\cup P_2} dF_1 < 1$, or
		\item $\sum_{i=1}^{m/2} \int_{Q_i} dF_2 < 1$, or
		\item $\sum_{i=1}^{m/2} \int_{Q_i} dF_2 = 1$ and $\int_{Q_k} dF_2 \neq 2/m$ for some $k$.
	\end{enumerate}
	Here, (a) and (b) are mutually exclusive, as are (c) and (d). We proceed by showing (a) and (c), (a) and (d), and then (b) holding regardless of whether (c) or (d) holds, leads to a contradiction of \eqref{eq:SS_property1}.
	
	Suppose (a) is true. If \eqref{eq:even_mass_location2} holds (but not \eqref{eq:even_mass_location1}), then  $\int_{I_j} dF_B \neq 1/m$ for any $j\in[m]$, contradicting condition \eqref{eq:SS_property1}. Now, assume \eqref{eq:even_mass_location2} is not true. If (c) holds,
	then there exists a reduced interval $T = [t^\ell,t^r]$ w.r.t $F_2$ disjoint from the $\{Q_i\}_{i=1}^{m/2}$ such that $q_k^r < t^\ell \leq t^r < q_{k+1}^\ell$  for some $k \in \{0,\ldots,m/2\}$ (defining $q_0^r = 0$ and $q_{m/2+1}^\ell = X_2$) and $\int_{T} dF_2 > 0$. It must be the case that $\int_S dF_B > 0$ because  the $\{Q_i\}_{i\in[m/2]}$ are already a `largest set' of reduced intervals. If this was not the case, $Q_k$ and $Q_{k+1}$ could be re-defined to include the probability mass contained in $T$ and still satisfy \eqref{eq:even_spacing}. This leads to a contradiction of \eqref{eq:SS_property1}. If (d) holds, then all mass in contained within the $\{I_j\}_{j\in[m]}$, but $\int_{I_j} dF_B \neq 1/m$ for at least one $j$.

	Suppose (b) is true. Then one can find a reduced interval $T = [t^\ell,t^r]$ w.r.t $F_1$ where $t^\ell > p_{i-1}^r$ and $t^r < p_i^\ell$ for some $i\in\{1,2,3\}$ (defining $p_0^r = 0$ and $p_3^\ell = B_1$), that satisfies $\int_{T} dF_1 > 0$. By the same arguments as above, it must be that $\int_S dF_B > 0$. If this was not the case, $P_1$, $P_2$, or both could be re-defined to include probability mass contained in $T$ and still satisfy \eqref{eq:even_spacing}. This leads to a contradiction of \eqref{eq:SS_property1}. Note this assertion is made irrespective of whether \eqref{eq:even_mass_location2}, (c), or (d) holds or not.
\end{proof}

The final lemma we will need to establish Theorem 1 asserts that no $F_B = F_1\circledast F_2 \in \Omega_B$ in the range $B_1 \in (d-r_B,d)$ can give $V_d^*(B_1) = V^*$.
\begin{lemma}\label{lem:no_SS2}
     Suppose $B/E < 1$, $m$ is even, $B_1 \in (d-r_B,d)$, and $F_B = F_1 \circledast F_2$ satisfies \eqref{eq:SS_property1}. Then $F_B \notin \Omega_B$.
\end{lemma}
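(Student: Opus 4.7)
The plan is to show that whenever $F_B = F_1 \circledast F_2$ satisfies \eqref{eq:SS_property1}, it must violate \eqref{eq:SS_property2}; combined with Lemma \ref{lem:SS_conditions}, this forces $F_B \notin \Omega_B$. By Lemma \ref{lem:SS1_conditions}, I decompose $F_1 = \tfrac{1}{2}(\mu_1 + \mu_2)$ and $F_2 = \tfrac{2}{m}\sum_{i=1}^{m/2} \nu_i$, where $\mu_i$ and $\nu_i$ are probability measures supported on the reduced intervals $P_i = [p_i^\ell, p_i^r] \subseteq [0,B_1]$ and $Q_i = [q_i^\ell,q_i^r] \subseteq [0,B_2]$, respectively. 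The convolution $F_B$ then distributes mass $1/m$ on each $I_j$ through the pieces $\mu_1 \circledast \nu_i$ landing in $I_{2i-1}$ and $\mu_2 \circledast \nu_i$ landing in $I_{2i}$.

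I would focus on the single instance of \eqref{eq:SS_property2} at $j = 1$ with $x \in I_2 = (d, d+r_B]$. Setting $y := x - d \in (0, r_B]$, and using that only the $i = 1$ summands of $F_B$ live on $I_1 \cup I_2$, this instance is equivalent to
\[
(\mu_2 \circledast \nu_1)\bigl((d, d+y]\bigr) \;\leq\; (\mu_1 \circledast \nu_1)\bigl([0, y)\bigr).
\]
The central observation is the left-endpoint comparison
\[
\bigl(p_2^\ell + q_1^\ell - d\bigr) - \bigl(p_1^\ell + q_1^\ell\bigr) \;=\; (p_2^\ell - p_1^\ell) - d \;\leq\; B_1 - d \;<\; 0,
\]
since $p_2^\ell - p_1^\ell \leq B_1 < d$. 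Equivalently, the $-d$-shifted support of $\mu_2 \circledast \nu_1$ begins strictly before the support of $\mu_1 \circledast \nu_1$.

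I would then pick $y$ in the interval $\bigl(p_2^\ell + q_1^\ell - d,\; p_1^\ell + q_1^\ell\bigr)$, which is nonempty by the above strict inequality and sits inside $(0, r_B]$: its left endpoint is strictly positive because $I_2$ is open at $d$, forcing $p_2^\ell + q_1^\ell > d$, and its right endpoint satisfies $p_1^\ell + q_1^\ell \in I_1 \subseteq [0, r_B]$. For such a $y$, the right-hand side of the displayed inequality vanishes, since $y$ lies strictly below the infimum $p_1^\ell + q_1^\ell$ of the support of $\mu_1 \circledast \nu_1$. The left-hand side, however, is strictly positive: the reduced property of $P_2$ and $Q_1$ yields $\mu_2([p_2^\ell, p_2^\ell + \epsilon)) \cdot \nu_1([q_1^\ell, q_1^\ell + \epsilon)) > 0$ for every $\epsilon > 0$, from which a product-measure estimate shows $\mu_2 \circledast \nu_1$ places positive mass in every right-open neighborhood of $p_2^\ell + q_1^\ell$. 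This contradicts \eqref{eq:SS_property2}, so by Lemma \ref{lem:SS_conditions}, $F_B \notin \Omega_B$.

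The main subtlety I anticipate is making rigorous the transfer of the reduced property from $(\mu_2, \nu_1)$ individually to the convolution $\mu_2 \circledast \nu_1$, establishing positive mass in arbitrarily small right-neighborhoods of $p_2^\ell + q_1^\ell$; this reduces to a short product-measure estimate. A secondary care item is confirming that the half-open form $I_2 = (d, d+r_B]$ rules out the boundary case $p_2^\ell + q_1^\ell = d$, so that the candidate interval for $y$ is strictly contained in $(0, r_B]$.
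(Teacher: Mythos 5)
Your proof is correct and follows essentially the same approach as the paper: both arguments invoke Lemma \ref{lem:SS1_conditions} to pin down the support of $F_B$ as convolutions of the reduced intervals, and both exploit $B_1 < d$ to show the $P_1$- and $P_2$-blocks within a period are separated by less than $d$, which forces a violation of \eqref{eq:SS_property2}. The only difference is cosmetic: the paper works at the right endpoints (showing the window $(p_2^r+q_i^r-d,\,p_2^r+q_i^r]$ captures mass strictly exceeding $1/m$), while you work at the left endpoints (choosing $x$ so that the right-hand side of \eqref{eq:SS_property2} vanishes while the left-hand side is positive); your explicit product-measure estimate also makes rigorous the step the paper dispatches by asserting that convolutions of reduced intervals are reduced.
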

\begin{proof}
	By the previous lemma, there exist reduced intervals $P_1,P_2,\{Q_i\}_{i=1}^{m/2}$ that satisfy \eqref{eq:even_spacing} - \eqref{eq:even_mass_location2}, such that $F_B$ satisfies \eqref{eq:SS_property1}: $\int_{I_j} dF_B = 1/m$ for $j\in[m]$. Here,  
	\begin{equation}
		\text{supp}(F_B) = \bigcup_{i=1}^{m/2} [p_1^\ell + q_i^\ell , p_1^r + q_i^r] \cup [p_2^\ell + q_i^\ell, p_2^r + q_i^r]
	\end{equation}
	where $[p_1^\ell + q_i^\ell, p_1^r + q_i^r] \subset I_{2i-1}$ and $[p_2^\ell + q_i^\ell, p_2^r + q_i^r] \subset I_{2i}$ for each $i \in [m/2]$. These intervals are  reduced w.r.t $F_B$, since they were generated from a convolution of reduced intervals. Because $B_1 < d$, we have $p_2^r+q_i^r - (p_1^r+q_i^r) = p_2^r - p_1^r < d$. Consequently, the interval $(p_2^r+q_i^r-d,p_2^r+q_i^r]$ contains the $1/m$ mass in $I_{2i}$ in addition to a nonzero mass in $I_{2i-1}$:
	\begin{equation}
	    \int_{(p_2^r+q_i^r-d,p_2^r+q_i^r]} dF_B > 1/m.
	\end{equation}
	Hence, \eqref{eq:SS_property2} is not satisfied.
\end{proof}
Lemmas \ref{lem:SS_conditions} - \ref{lem:no_SS2} and Proposition \ref{prop:bands} thus establish Theorem 1. Proposition \ref{prop:bands} asserts any game in an even partition $m > 2$ (implying $B/E > 2/3$) will have two intervals, $[0,r_B]$ and $(d,d+r_B]$, within $B_1 \in [0,B/2]$ where $V_d^*(B_1) = V^*$. Lemmas \ref{lem:SS1_conditions} and \ref{lem:no_SS2} show there are divisions $B_1$ between these two intervals such that one can find strategies $F_B = F_1\circledast F_2$ that satisfy \eqref{eq:SS_property1}, but never \eqref{eq:SS_property2}. Hence, $V_d^*(B_1) < V^*$ for these divisions. Note that while showing non-monotonicity of the distributed security value for only a single instance of $B/E$ is required to prove the statement of Theorem 1, we have done so for a broad class of instances.

\begin{figure*}
    \centering
    \includegraphics[scale=.48]{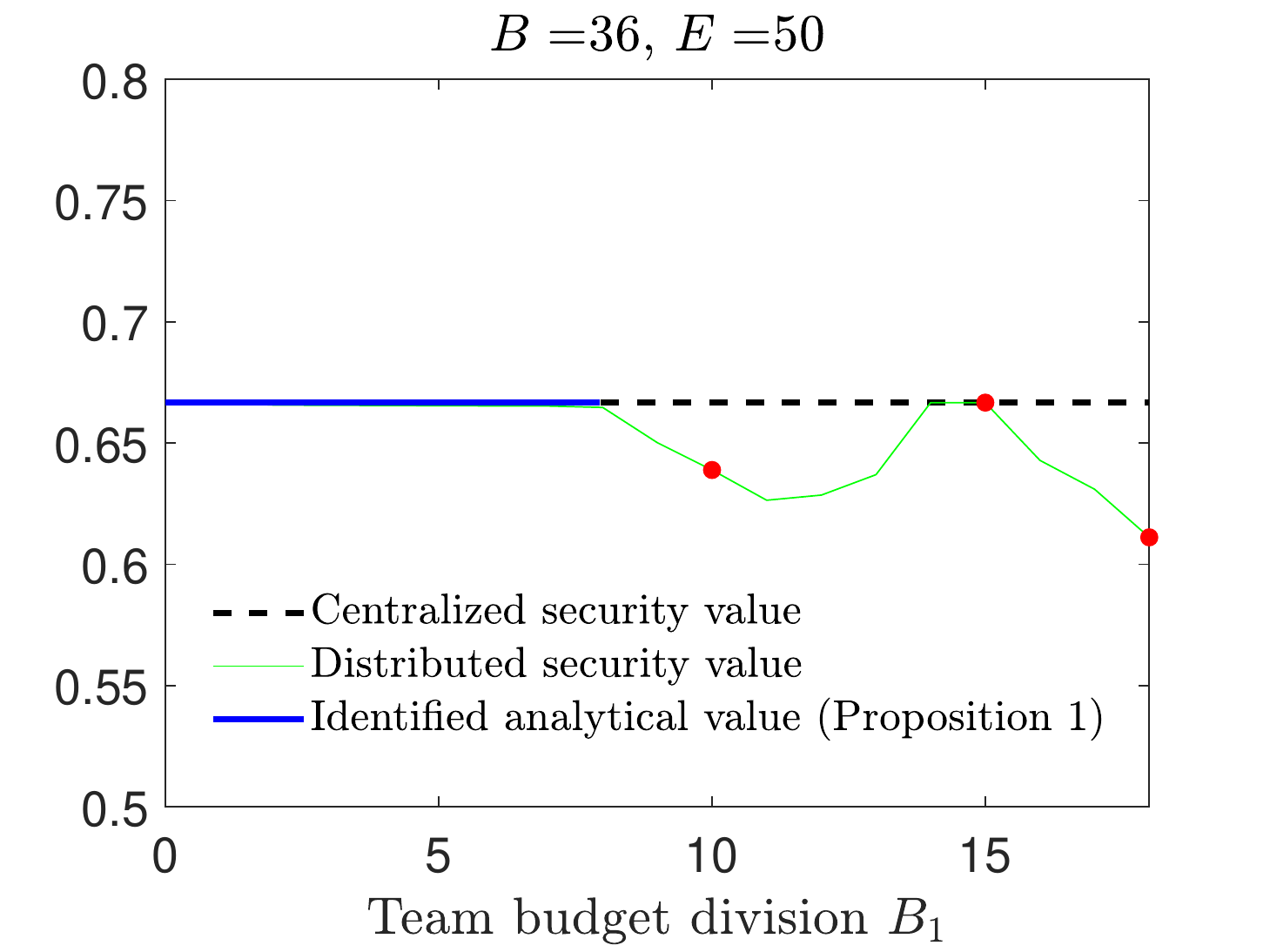}
    \includegraphics[scale=.48]{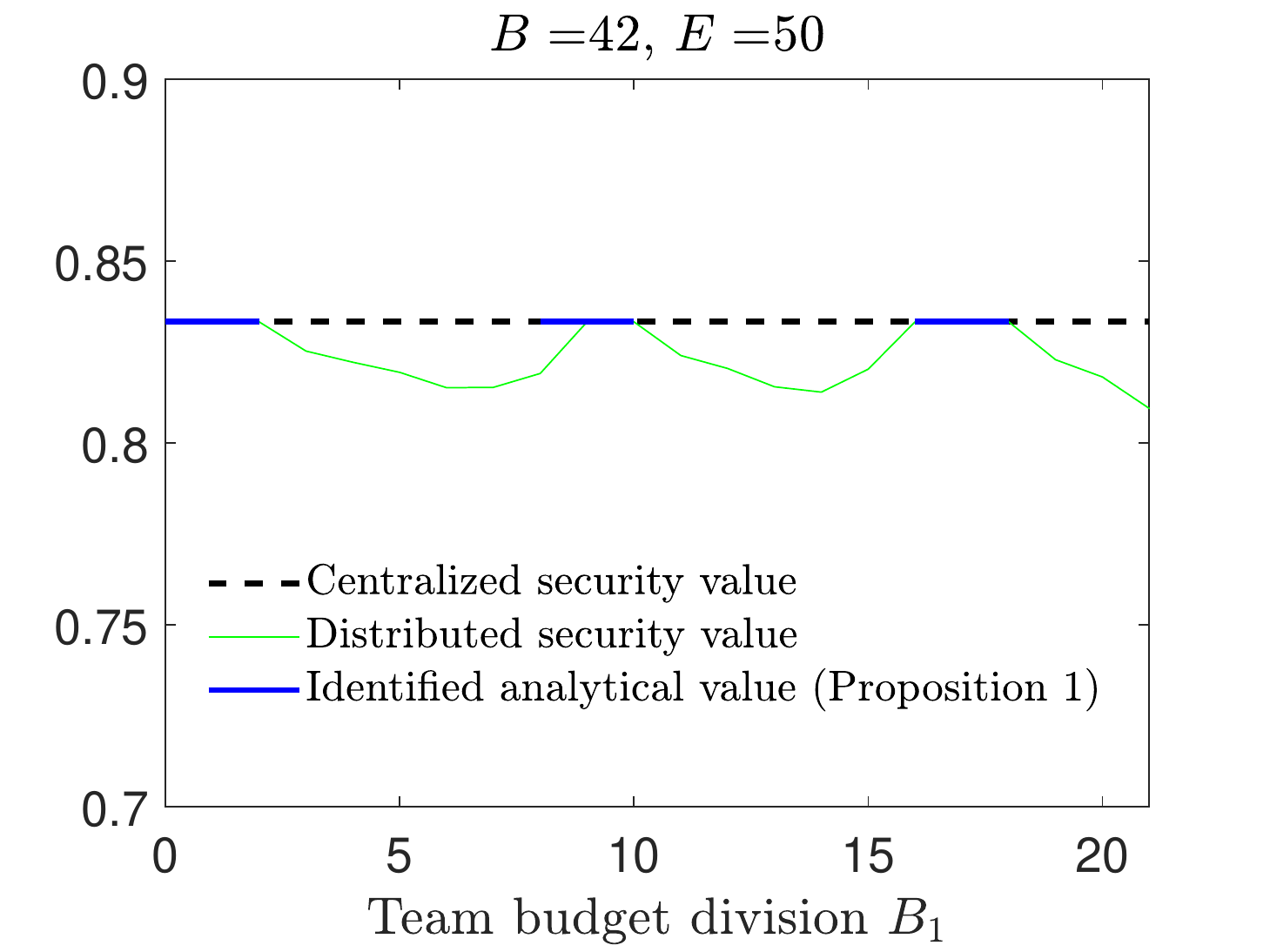}
    \caption{\small Plots of computed distributed security values in the integer version of the team Blotto game. They display non-monotonic behavior as the division $B_1$ ranges from 0 (completely centralized) to 18 (distributed, most balanced budgets). (Left) $B=36$, $E=50$. The red dots indicate the particular setups studied in Example 1. This game instance is in partition $m=3$. Proposition \ref{prop:bands} states the distributed security value coincides with the centralized (for non-integer version) in the interval $B_1 \in [0,r_B]$ (blue line), where $r_B=8$ here. The simulation is in accordance with this prediction. As Proposition \ref{prop:bands} is only a sufficient condition for $V_d^*(B_1) = V^*$, it does not account for the distributed value at $B_1 = 15$ shown here that recovers the centralized security value. Further study is required to verify and to find distributed strategies that recover the centralized value here. (Right) $B=42$, $E=50$. This game instance is in an even partition ($m=6$), hence the theoretical arguments for non-monotonicity in Section \ref{sec:proof} apply. In particular $V_d^*(B_1) < V^*$ in the identified interval $B_1\in (r_B,d)$, where $r_B = 2$ and $d=8$. Proposition \ref{prop:bands} states $V_d^*(B_1) = V^*$ on the intervals $B_1 \in [0,r_B]$, $(d,d+r_B]$, and $(2d,2d+r_B]$. The simulation is in accordance with these predictions, with the exception of the division $B_1 = 8$. Indeed, \texttt{fmincon} does not guarantee the global solution to a nonlinear, non-convex optimization. Any such solution it returns would serve as a lower bound.}
    \label{fig:slices}
\end{figure*}

\section{Simulations}

In this section, we provide some numerical simulations that further highlight the non-monotonic nature of the distributed security value $V_d^*$. As described in Example 1 in Section \ref{sec:model}, we employ numerical techniques to calculate $V_d^*(B_1)$ \eqref{eq:distributed_value} in the analogous integer Blotto game, where allocations to battlefields are restricted to the integers. The reformulated optimization problem \eqref{eq:distributed_value} is thus finite-dimensional, but remains non-convex. Here, the sub-players' mixed strategies are probability vectors of length $B_i+1$ that specify their independent randomizations over all possible allocations to battlefield 1. In particular, their strategy spaces are $\{(0,B_i),(1,B_i-1),\ldots,(B_i,0)\}$. The optimization is non-convex because of the convolution constraint -- the joint probability vector over the strategy space $\{(0,B),\ldots,(B,0)\}$ must be a product distribution from the sub-players' mixed strategies. We implemented the nonlinear function solver \texttt{fmincon} in Matlab to solve this optimization problem.

Once again, we stress that the computed security values $V_d^*$ from this scheme may not be completely accurate, as \eqref{eq:distributed_value} is non-convex and the nonlinear function solver is not guaranteed to converge to the optimal point. As such, one should treat any resulting numerical computation as a lower bound on the actual security value.  We used such numerical tools in Section \ref{sec:model} (Figure \ref{fig:slice_36}) to gauge the behavior of $V_d^*$ and to develop intuition for general theoretical properties on $V_d^*$. We proceeded to establishing such properties analytically in Section \ref{sec:proof} for the non-integer Blotto setting.


Figure \ref{fig:slices} depicts resulting distributed security values computed from our numerical scheme, over the range of budget divisions $B_1 \in \{0,1,\ldots,B/2\}$. These plots illustrate the non-monotonic behavior of $V_d^*(B_1)$ and validate Proposition \ref{prop:bands}, which identifies intervals within $[0,B/2]$ where $V_d^*(B_1) = V^*$.

\section{Conclusion}

Multi-agent systems allow many agents to autonomously collaborate on accomplishing complex tasks by distributing decision-making abilities and shared resources among them. However, when the actions of multiple agents are interdependent with regards to completing the same task, inefficiencies can arise as a result of their independent decision-making processes. In this paper, we framed such a scenario in the context of a Colonel Blotto game, where a team of two players compete against a common enemy over the same two battlefields. We studied how the division of resources among the two players affects their chances against the enemy. The divisions range from a completely centralized command structure, i.e. one of the team players has control over all resources, to varying degrees of distributed command structures, where each player has control over a portion of the total resources. Our main contribution asserts that the team's performance is non-monotonic in this range of command structures. This finding alludes to an interesting design problem in distributing resources among autonomous agents who are collaborating together on a complex task.

While we have established the interesting role of resource division on the team's performance in a symmetric team setting, this paper is clearly a first step in studying a range of research questions on this topic. For instance, one would like to characterize the set of Nash equilibria in these team settings, and identify any inefficiencies that can arise in such stable outcomes. One can then consider utility design problems as a means to coordinate the team's behavior. The presence of multiple concurrent tasks and interdependencies for the team generalizes our current setting, and is also worthy of study. 



\bibliographystyle{IEEEtran}
\bibliography{sources}

\begin{thebibliography}{10}
\providecommand{\url}[1]{#1}
\csname url@samestyle\endcsname
\providecommand{\newblock}{\relax}
\providecommand{\bibinfo}[2]{#2}
\providecommand{\BIBentrySTDinterwordspacing}{\spaceskip=0pt\relax}
\providecommand{\BIBentryALTinterwordstretchfactor}{4}
\providecommand{\BIBentryALTinterwordspacing}{\spaceskip=\fontdimen2\font plus
\BIBentryALTinterwordstretchfactor\fontdimen3\font minus
  \fontdimen4\font\relax}
\providecommand{\BIBforeignlanguage}[2]{{%
\expandafter\ifx\csname l@#1\endcsname\relax
\typeout{** WARNING: IEEEtran.bst: No hyphenation pattern has been}%
\typeout{** loaded for the language `#1'. Using the pattern for}%
\typeout{** the default language instead.}%
\else
\language=\csname l@#1\endcsname
\fi
#2}}
\providecommand{\BIBdecl}{\relax}
\BIBdecl

\bibitem{Schulman_2019}
L.~J. Schulman and U.~V. Vazirani, ``The duality gap for two-team zero-sum
  games,'' \emph{Games and Economic Behavior}, vol. 115, pp. 336--345, 2019.

\bibitem{Gross_1950}
O.~Gross and R.~Wagner, ``{A continuous {Colonel} {Blotto} game},'' RAND
  Project, Air Force, Santa Monica, Tech. Rep., 1950.

\bibitem{Macdonell_2015}
S.~T. Macdonell and N.~Mastronardi, ``Waging simple wars: a complete
  characterization of two-battlefield {Blotto} equilibria,'' \emph{Economic
  Theory}, vol.~58, no.~1, pp. 183--216, 2015.

\bibitem{Paccagnan_2019}
D.~{Paccagnan}, R.~{Chandan}, and J.~R. {Marden}, ``Utility design for
  distributed resource allocation—part i: Characterizing and optimizing the
  exact price of anarchy,'' \emph{IEEE Transactions on Automatic Control},
  vol.~65, no.~11, pp. 4616--4631, 2020.

\bibitem{Gairing_2009}
M.~Gairing, ``Covering games: Approximation through non-cooperation,'' in
  \emph{Internet and Network Economics}, S.~Leonardi, Ed.\hskip 1em plus 0.5em
  minus 0.4em\relax Berlin, Heidelberg: Springer Berlin Heidelberg, 2009, pp.
  184--195.

\bibitem{Marden_2013}
J.~R. {Marden} and A.~{Wierman}, ``Overcoming the limitations of utility design
  for multiagent systems,'' \emph{IEEE Transactions on Automatic Control},
  vol.~58, no.~6, pp. 1402--1415, 2013.

\bibitem{Borel}
E.~Borel, ``La th\'eorie du jeu les \'equations int\'egrales \`a noyau
  sym\'etrique,'' \emph{Comptes Rendus de l'Acad\'emie}, vol. 173, 1921.

\bibitem{Golman_2009}
R.~Golman and S.~E. Page, ``General {Blotto}: games of allocative strategic
  mismatch,'' \emph{Public Choice}, vol. 138, no. 3-4, pp. 279--299, 2009.

\bibitem{Roberson_2006}
B.~Roberson, ``The {Colonel} {Blotto} game,'' \emph{Economic Theory}, vol.~29,
  no.~1, pp. 1--24, 2006.

\bibitem{Thomas_2018}
C.~Thomas, ``N-dimensional {Blotto} game with heterogeneous battlefield
  values,'' \emph{Economic Theory}, vol.~65, no.~3, pp. 509--544, 2018.

\bibitem{Kovenock_2020}
D.~Kovenock and B.~Roberson, ``Generalizations of the general {Lotto} and
  {Colonel} {Blotto} games,'' \emph{Economic Theory}, pp. 1--36, 2020.

\bibitem{Vu_2020thesis}
D.~Q. Vu, ``Models and solutions of strategic resource allocation problems:
  Approximate equilibrium and online learning in blotto games,'' Ph.D.
  dissertation, Sorbonne Universites, UPMC University of Paris 6, 2020.

\bibitem{Kovenock_2012}
D.~Kovenock and B.~Roberson, ``Coalitional colonel blotto games with
  application to the economics of alliances,'' \emph{Journal of Public Economic
  Theory}, vol.~14, no.~4, pp. 653--676, 2012.

\bibitem{Gupta_2014b}
A.~Gupta, T.~Ba{\c s}ar, and G.~Schwartz, ``A three-stage {Colonel} {Blotto}
  game: when to provide more information to an adversary,'' in
  \emph{International Conference on Decision and Game Theory for
  Security}.\hskip 1em plus 0.5em minus 0.4em\relax Springer, 2014, pp.
  216--233.

\bibitem{Chandan_2020}
R.~{Chandan}, K.~{Paarporn}, and J.~R. {Marden}, ``When showing your hand pays
  off: Announcing strategic intentions in colonel blotto games,'' in \emph{2020
  American Control Conference (ACC)}, 2020, pp. 4632--4637.

\end{thebibliography}

\end{document}